\newcommand{\term}{\texttt}
\newcommand{\alg}{\term{alg}}
\newcommand{\EXP}{\mathbb{E}}
\newcommand{\PROB}{\textnormal{Pr}}
\renewcommand{\Pr}[1]{{\PROB \sbr{#1}}}
\newcommand{\Ex}[1]{{\EXP \sbr{#1}}}
\newcommand{\Exu}[2]{\ensuremath{\EXP_{#1}\left[#2\right]}}
\newcommand{\rbr}[1]{\left(#1\right)}
\newcommand{\abs}[1]{\left|#1\right|}
\newcommand{\sbr}[1]{\left[#1\right]}
\newcommand{\cbr}[1]{\left\{#1\right\}}
\newcommand{\EqComment}[1]{\text{\emph{(#1)}}}
\newtheorem{theorem}{Theorem}
\newtheorem{corollary}[theorem]{Corollary}
\newtheorem{lemma}[theorem]{Lemma}
\title{Pandora with Inaccurate Priors}
\author{
    Kiarash Banihashem\thanks{University of Maryland, \texttt{kiarash@umd.edu}} \and
    Xiang Chen\thanks{Adobe Research, \texttt{xiangche@adobe.com}} \and
    MohammadTaghi Hajiaghayi\thanks{University of Maryland, \texttt{hajiagha@cs.umd.edu}} \and
    Sungchul Kim\thanks{Adobe Research, \texttt{sukim@adobe.com}} \and
    Kanak Mahadik\thanks{Adobe Systems, \texttt{mahadik@adobe.com}} \and
    Ryan Rossi\thanks{Adobe Research, \texttt{ryrossi@adobe.com}} \and
    Tong Yu\thanks{Adobe Research, \texttt{tyu@adobe.com}}
}
\date{}
\begin{document}
\maketitle

\begin{abstract}
We investigate the role of inaccurate priors for the classical Pandora's box problem. In the classical Pandora's box problem we are given a set of boxes each with a known cost and an unknown value sampled from a known distribution.
    We investigate how inaccuracies in the  beliefs can affect existing algorithms.
    Specifically, we assume that the knowledge of the underlying distribution has a small error in the Kolmogorov distance, and study how this affects the utility obtained by the optimal algorithm.
\end{abstract}

\section{Introduction}

The Pandora's Box problem is a classic problem in decision theory. We are presented with $n$ boxes, where each box $i$ has an associated cost $c_i \in \mathbb{R}^{\geq 0}$ and a value $X_i$, a random variable drawn from the distribution $D_i$.
The distributions $D_1, \dots, D_n$ and the costs $c_1, \dots, c_n$ are known in advance, but the values $X_1, \dots, X_n$ remain unknown. Intuitively, the cost $c_i$ and the distribution $D_i$ can be thought of as being written on the "label" of the box, while the value $X_i$ is hidden "inside" the box.

The objective is to decide which boxes to open and then select a value from one of the opened boxes in a way that maximizes the net utility. We must pay the cost for every box we open but will only receive the value of the box we ultimately choose to keep.
Let $O$ denote the set of opened boxes and $a$ represent the box we decide to keep. The total utility is given by:
\begin{align*}
    X_a - \sum_{i \in O} c_i. 
\end{align*}
Note that we can only keep a box that we have opened, i.e., $a \in O$. Additionally, the selection of boxes can be done \emph{adaptively}; the decision to open a subsequent box may depend on the values revealed in previously opened boxes.

The problem has an elegant solution based on the concept of \emph{reserve price}. Let $\sigma_i(D_i)$ denote a threshold satisfying
\begin{align*} \Ex{\sbr{X_i - \sigma_i}_{+}} = c_i, \end{align*}
where $\sbr{\cdot}_{+}$ represents $\max\cbr{0, \cdot}$. The reserve price $\sigma_i$ is the threshold beyond which the expected gain from a box equals its cost of opening.
The optimal algorithm is to open boxes in descending order of their reserve prices, as formalized in the following lemma.

\begin{lemma}\label{lm:orig}
    Set $\kappa_i = \min \cbr{\sigma_i, X_i}$.
    The expected utility of any algorithm for the Pandora's box problem is $\Ex{\max_i \kappa_i}$.
    Furthermore, the following algorithm achieves this bound.
    Open boxes in descending order of $\sigma_i$ and stop when the next $\sigma_i$ is below the current largest value $X_i$ among the opened boxes. At that point, take the largest value among opened boxes.
\end{lemma}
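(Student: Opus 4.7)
My plan is to prove both the utility bound and the achievability using a single amortization trick. Define $Y_i = \sbr{X_i - \sigma_i}_{+}$, so that pointwise $X_i = \kappa_i + Y_i$ and, by the definition of $\sigma_i$, $\Ex{Y_i} = c_i$. For any (possibly adaptive) algorithm with opened set $O$ and chosen box $a \in O$, I would write
\begin{align*}
    X_a - \sum_{i \in O} c_i = \kappa_a + Y_a - \sum_{i \in O} c_i \leq \max_i \kappa_i + \sum_{i \in O} Y_i - \sum_{i \in O} c_i,
\end{align*}
using $Y_a \geq 0$ and $a \in O$. The key identity is $\Ex{\sum_{i \in O} Y_i} = \Ex{\sum_{i \in O} c_i}$, which follows from a deferred-randomness argument: the indicator $\ind{i \in O}$ is determined by values revealed strictly before box $i$ is considered, and is therefore independent of $X_i$. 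Hence $\Ex{\ind{i \in O}\cdot Y_i} = \Pr{i \in O}\cdot c_i$, and summing yields $\Ex{X_a - \sum_{i \in O} c_i} \leq \Ex{\max_i \kappa_i}$.

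For achievability I would show that both inequalities above are almost surely tight under the threshold algorithm, namely (i) $X_j < \sigma_j$ for every opened $j \neq a$ (so $Y_j = 0$), and (ii) $\kappa_a = \max_i \kappa_i$. Claim (i) follows from the opening rule combined with the descending-$\sigma$ order: either $a$ was opened before $j$, in which case the rule at $j$'s step gives $X_j \leq X_a < \sigma_j$, or $a$ was opened after $j$, in which case the rule at $a$'s step gives $X_j < \sigma_a \leq \sigma_j$. For (ii) a short case split shows $\kappa_i \leq \kappa_a$ for every $i$: for unopened $i$ the stopping rule gives $\sigma_i \leq X_a$ and the ordering gives $\sigma_i \leq \sigma_a$, hence $\kappa_i \leq \min\cbr{\sigma_a, X_a} = \kappa_a$; for opened $i \neq a$ we have $X_i \leq X_a$ by the choice of $a$, while either $X_i < \sigma_a$ (if $a$ was opened after $i$) or $\sigma_i \leq \sigma_a$ (if $a$ was opened before $i$) suffices.

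The main obstacle I anticipate is the formal handling of independence in the upper bound. Because the algorithm is adaptive, one must set up the filtration generated by the sequence of revelations carefully and argue that $\ind{i \in O}$ is measurable with respect to the $\sigma$-algebra at the moment just before $X_i$ would be revealed, invoking the mutual independence of the $X_j$. The remainder is routine bookkeeping on the ordering of $\sigma$'s and $X$'s enforced by the opening and stopping rules.
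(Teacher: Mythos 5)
The paper does not actually prove Lemma~\ref{lm:orig}: it is stated as a known result and attributed to Weitzman (with an ``alternative proof'' credited to Kleinberg, Waggoner, and Weyl), so there is no in-paper proof to compare against. Your argument, however, is exactly the Kleinberg--Waggoner--Weyl amortized proof, and it is correct in substance. The pointwise identity $X_i = \kappa_i + Y_i$ with $Y_i = [X_i - \sigma_i]_+$, the observation $\Ex{Y_i} = c_i$, the deferred-randomness step $\Ex{\ind{i\in O}\,Y_i} = \Pr{i\in O}\,c_i$ (using that $\ind{i\in O}$ is determined before $X_i$ is revealed, hence independent of $X_i$ by mutual independence of the boxes), and the two tightness claims for the threshold policy are all the right moves and your case analyses check out. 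It is also worth noting that the paper itself uses essentially the same amortization trick later, inside the proof of Lemma~\ref{lm:g_increas_contin}, where $c_j$ is replaced by the ex-post charge $[X_j - \sigma_j]_+$; your proposal is therefore very much in the spirit of the paper even though the lemma itself is only cited.

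Two small points you should tidy up. First, the phrase ``using $Y_a \ge 0$ and $a\in O$'' justifying $Y_a \le \sum_{i\in O} Y_i$ should read ``using $Y_i \ge 0$ for all $i$ and $a\in O$'': it is the nonnegativity of the \emph{other} terms in the sum that matters, not of $Y_a$. Second, some of the strict inequalities in the tightness argument (e.g., $X_a < \sigma_j$, $X_j < \sigma_a$) should in general be non-strict, since the rule ``open box $j$ if $\sigma_j$ is not below the running maximum'' only gives $\sigma_j \ge \max_{\ell<j} X_\ell$; this does not affect the conclusion, because $Y_j = 0$ only requires $X_j \le \sigma_j$. Finally, if you want the statement to hold literally as written, you should either assume $c_i \le \Exu{X\sim D_i}{X}$ (so that $\sigma_i \ge 0$ and the threshold policy opens at least one box whenever it should) or include a zero-cost, zero-value outside option; otherwise the case $O = \emptyset$ needs a sentence. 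These are cosmetic, not conceptual.
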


The original formulation of the problem is attributed to Weitzman~\cite{weitzman1978optimal}.
More recently, Kleinberg, Waggoner, and
Weyl~\cite{kleinberg2016descending} study the application of thie problem in auctions and provide alternative proof of the optimal solution. Subsequent works have considered variations of the problem including with different settings including online arrival order~\cite{esfandiari2019online} correlated distributions~\cite{chawla2020pandora}. We refer to the survey by Beyhaghi and Cai~\cite{beyhaghi2024recent} for more details.

In this paper, we investigate how inaccurate knowledge of the distributions
can affect this problem. 
Specifically, we study the following problem. 
Assume that instead of the distribution $D_i$, we are given a distribution $D'_i$ which differs from $D_i$ by at most $\epsilon$ in Kolmogorov distance.
To what extent does the error in our knowledge of the distribution affect the utility of the algorithm?
This question has previously been studied for the prophet inequalities problem by D{\"u}tting and Kesselheim~\cite{dutting2019posted}, which serves as the main inspiration behind this works. To our knowledge we are the first to investigate inaccurate priors for the Pandora's box problem.

\section{Preliminaries and notation}
We use $X_i$ and $c_i$ to denote, respectively, the value and the cost of box $i$.
The value of each box is random and we assume it is sampled from some distribution, which we usually denote with $D_i$. 
We assume throughout that $D_i$ is supported over the set $[0, 1]$. 
Given a box with distribution $D$ and cost $c$, we refer to
$\sigma(D, c)$ as the reservation price for the box and define it as the value satisfying $\Exu{X \sim D}{\sbr{X - \sigma(D, c)}_+} = c$. We will drop the dependence on $D$ and $c$ when it is clear from context. 
We will assume that the distributions $D$ are continuous and as such, the threshold $\sigma$ exists. 
We use bold notation to denote vector version of the quantities above; e.g., $\mathbf{D} := \rbr{D_1, \dots, D_n}$.

Given a vector $\mathbf{\tau} = (\tau_1, \dots, \tau_n)$ for estimated reserved values for the boxes, we use
$\alg_{\tau}$
to denote
an algorithm which opens box in decreasing order of $\tau_i$ and stops when the maximum among already observed values exceeds the next value of $\tau_i$, at which point the algorithm accepts the highest value already seen (or $0$ if no box has been opened).
We use $W_{\tau}$ to denote the expected utility of this algorithm.
The expected utility implicity depends on the value distribution and the cost of each box, and we will often make this dependence explicit by writing either $W_{\mathbf{\tau}}(\mathbf{D})$ or
$W_{\mathbf{\tau}}(\mathbf{D}, \mathbf{c})$.
Note that, in general, we do not require $\mathbf{\tau} = \mathbf{\sigma}(\mathbf{D}, \mathbf{c})$. 
We will often abuse notation and use
$W_{\mathbf{\tau}}(x_1, \dots, x_n)$ for fixed values of $x_i$ to denote the utility when the box values are set to $x_i$.
We note that
$
\Ex{W_{\mathbf{\tau}}(\mathbf{D})}
= 
\Exu{X_1 \sim D_1, \dots, X_n \sim D_n} {W_{\mathbf{\tau}}(X_1, \dots, X_n)}
$.

Given two distributions $D, D'$, their Kolmogorov distance is defined as the supremum
of the difference between their
cumulative distribution functions;
letting $F_{D}$ denote the CDF of $D$, this can be formally written as
\begin{align*}
    d_{K}(D, D') =
    \sup_{z \in [0, 1]}\abs{F_{D}(z) - F_{D'}(z)}
    .
\end{align*}

\section{Main result}
Let $\mathbf{D}, \mathbf{D}'$ denote two different values for the input distributions, $\mathbf{c}$ denote the cost of opening the boxes and set
$\sigma = \sigma(\mathbf{D}, \mathbf{c})$ and
$\sigma' = \sigma(\mathbf{D}', \mathbf{c})$.
Our main result is the following theorem, which bounds how much the expected ulity of $\sigma$ is different under $\mathbf{D}$ and $\mathbf{D}'$.

\begin{theorem}\label{thm:kolomogrov_1}
    Assume that
    $d_{K}(D_i, D'_i) \le \epsilon$ for all $i$.
    Then
    \begin{align*}
        \abs{
        W_{\mathbf{\sigma}}(\mathbf{D}) 
        - 
        W_{\mathbf{\sigma}}(\mathbf{D}') 
        }
        \le O(n\epsilon).
    \end{align*}
\end{theorem}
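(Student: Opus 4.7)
The plan is to reduce both $W_{\sigma}(\mathbf{D})$ and $W_{\sigma}(\mathbf{D}')$ to the common quantity $\Ex{\max_i \kappa_i}$ with $\kappa_i=\min\cbr{\sigma_i,X_i}$, picking up additive errors of size at most $n\epsilon$ in each reduction. On the $\mathbf{D}$ side this is exactly the content of Lemma~\ref{lm:orig}. On the $\mathbf{D}'$ side, the reserve-price identity $\Exu{X_i\sim D_i}{\sbr{X_i-\sigma_i}_+}=c_i$ no longer matches the distribution actually governing $X_i$, so the KWW-style derivation must be redone while absorbing the mismatch into a small correction.

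The key structural fact is a pointwise identity for $\alg_{\sigma}$. Sort the boxes so that $\sigma_1\ge\ldots\ge\sigma_n$, and for any realization $(x_1,\ldots,x_n)$ let $\tau$ be the index of the last box opened, $O=\cbr{1,\ldots,\tau}$, and $a=\argmax_{i\in O} x_i$. A short case analysis yields the following. If $a=\tau$, the continuation conditions at each earlier step give $x_i\le M_{a-1}<\sigma_a\le\sigma_i$ for every $i<a$. If $a<\tau$, the stopping condition $M_\tau=x_a>\sigma_{\tau+1}$ combined with $x_a\le M_{\tau-1}<\sigma_\tau\le\sigma_a$ forces $x_a<\sigma_a$, and then $x_i\le x_a<\sigma_a\le\sigma_i$ for the other opened boxes. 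Either way $x_i<\sigma_i$ for all $i\in O\setminus\cbr{a}$; moreover, for $i>\tau$ the stopping inequality gives $\sigma_i\le\sigma_{\tau+1}<x_a$. Together these imply the pointwise identities
\begin{align*}
\sbr{x_a-\sigma_a}_+ \;=\; \sum_{i\in O}\sbr{x_i-\sigma_i}_+ \qquad\text{and}\qquad \kappa_a=\max_{i\in[n]}\kappa_i.
\end{align*}

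Using $x_a=\kappa_a+\sbr{x_a-\sigma_a}_+$ we rewrite $W_{\sigma}(\mathbf{x})=\kappa_a+\sum_{i\in O}\sbr{x_i-\sigma_i}_+-\sum_{i\in O}c_i$. Taking expectation under $\mathbf{D}'$ and observing that $\cbr{i\in O}$ depends only on $X_1,\ldots,X_{i-1}$ (hence is independent of $X_i$ under the product distribution) gives
\begin{align*}
W_{\sigma}(\mathbf{D}') \;=\; \Exu{\mathbf{X}\sim\mathbf{D}'}{\max_i\kappa_i} \;+\; \sum_{i=1}^n \Pr{i\in O}\cdot\Delta_i,
\end{align*}
where $\Delta_i=\Exu{X_i\sim D'_i}{\sbr{X_i-\sigma_i}_+}-c_i$. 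Writing $\Exu{X_i\sim D}{\sbr{X_i-\sigma_i}_+}=\int_{\sigma_i}^1\rbr{1-F_D(z)}\,dz$ and using $d_K(D_i,D'_i)\le\epsilon$ yields $\abs{\Delta_i}\le\epsilon$, so $\abs{W_{\sigma}(\mathbf{D}')-\Exu{\mathbf{D}'}{\max_i\kappa_i}}\le n\epsilon$.

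It remains to bound $\abs{\Exu{\mathbf{D}}{\max_i\kappa_i}-\Exu{\mathbf{D}'}{\max_i\kappa_i}}$ by a hybrid argument that swaps $D_k$ for $D'_k$ one coordinate at a time. With the remaining coordinates fixed, the function $g(y):=\Ex{\max\rbr{\max_{j\ne k}\kappa_j,\,\min(\sigma_k,y)}}$ is non-decreasing on $[0,1]$ with $g(1)-g(0)=\Ex{\sbr{\sigma_k-\max_{j\ne k}\kappa_j}_+}\le\sigma_k\le 1$, so integration by parts gives $\abs{\Exu{X_k\sim D_k}{g(X_k)}-\Exu{X_k\sim D'_k}{g(X_k)}}\le d_K(D_k,D'_k)\cdot(g(1)-g(0))\le\epsilon$. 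Summing over the $n$ swaps and combining with Lemma~\ref{lm:orig} yields $\abs{W_{\sigma}(\mathbf{D})-W_{\sigma}(\mathbf{D}')}\le 2n\epsilon=O(n\epsilon)$. The main obstacle is establishing the pointwise identity in the second paragraph; once it is in hand, both the correction on the $\mathbf{D}'$ side and the hybrid on $\Ex{\max_i\kappa_i}$ are controlled by routine Kolmogorov-distance manipulations.
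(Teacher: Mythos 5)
Your proof is correct (modulo one small slip noted below) and takes a genuinely different route from the paper's. You derive the global pointwise identity
\begin{align*}
W_{\sigma}(\mathbf{x})=\max_i\kappa_i+\sum_{i\in O}\rbr{\sbr{x_i-\sigma_i}_+-c_i},
\end{align*}
and then split the error into the explicit slack terms $\Delta_i=\Exu{X_i\sim D'_i}{\sbr{X_i-\sigma_i}_+}-c_i$ (each at most $\epsilon$ in absolute value, entering only through $\Pr{i\in O}$) plus a coordinate-by-coordinate hybrid on the model-free functional $\Ex{\max_i\kappa_i}$. The paper instead runs the hybrid directly on $W_{\sigma}$: for each swap $D_i\to D'_i$, Lemma~\ref{lm:g_increas_contin} proves a \emph{conditional} version of the same cap decomposition, showing $x\mapsto W_{\sigma}(x_1,\dots,x_{i-1},x,D_{i+1},\dots,D_n)$ equals $\Ex{\max\cbr{B(x),\max_{j>i}\kappa_j}}-c_{\le i}$ (hence monotone and $1$-Lipschitz), and Lemma~\ref{lm:jan14_2} then applies essentially the same Kolmogorov-distance argument you use in your final paragraph. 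Both routes thus rest on the reservation-price cap $\kappa$, but your organization is more transparent about where the wrong prior hurts: only through $\Delta_i$, the mismatch between $c_i$ and the actual expected excess under $D'_i$. Two small items to tighten. First, in the $a<\tau$ case your chain $x_i\le x_a<\sigma_a\le\sigma_i$ only yields $x_i<\sigma_i$ for opened boxes with $i<a$; for opened $i$ with $a<i\le\tau$ you need instead $\sigma_i>M_{i-1}\ge x_a\ge x_i$ from the continuation condition at step $i$. Second, like the paper's Lemma~\ref{lm:orig}, your identity presumes $O\neq\emptyset$; if every $\sigma_i\le 0$ then no box is opened, $W_{\sigma}(\mathbf{x})=0$, while $\max_i\kappa_i$ can be negative. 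This is harmless here, since boxes with $\sigma_i\le 0$ are never opened and may be dropped at the outset, but it deserves a sentence.
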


While the above theorem bounds how much utility changes by changing the distribution, it does not say anything about how basing a strategy on the wrong distribution can affect its utility. The following corollary achieves exactly this, using the above theorem and the optimality of $\sigma$ and $\sigma'$ for their respective distributions.
\begin{corollary}
    If $d_{K}(D_i, D'_i) \le \epsilon$ for all $i$,
    \begin{align*}
         0 \le W_{\mathbf{\sigma}}(\mathbf{D})
         - W_{\mathbf{\sigma'}}(\mathbf{D})
         \le O(n\epsilon).
    \end{align*}
\end{corollary}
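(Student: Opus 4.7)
The plan is to derive both bounds in the corollary from the optimality of Weitzman's algorithm (Lemma~\ref{lm:orig}) together with two applications of Theorem~\ref{thm:kolomogrov_1}.

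For the lower bound $0 \le W_{\mathbf{\sigma}}(\mathbf{D}) - W_{\mathbf{\sigma'}}(\mathbf{D})$, I would invoke Lemma~\ref{lm:orig}: the strategy $\alg_{\mathbf{\sigma}}$ is optimal among all algorithms when the true distribution is $\mathbf{D}$, so in particular it beats the suboptimal threshold strategy $\alg_{\mathbf{\sigma'}}$ evaluated on $\mathbf{D}$. This is essentially immediate and requires no real work.

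For the upper bound, the plan is a three-term telescoping inequality:
\begin{align*}
W_{\mathbf{\sigma}}(\mathbf{D}) - W_{\mathbf{\sigma'}}(\mathbf{D})
= \bigl(W_{\mathbf{\sigma}}(\mathbf{D}) - W_{\mathbf{\sigma}}(\mathbf{D}')\bigr)
+ \bigl(W_{\mathbf{\sigma}}(\mathbf{D}') - W_{\mathbf{\sigma'}}(\mathbf{D}')\bigr)
+ \bigl(W_{\mathbf{\sigma'}}(\mathbf{D}') - W_{\mathbf{\sigma'}}(\mathbf{D})\bigr).
\end{align*}
The first term is bounded by $O(n\epsilon)$ directly by Theorem~\ref{thm:kolomogrov_1}. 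The middle term is non-positive by Lemma~\ref{lm:orig} applied to $\mathbf{D}'$, since $\mathbf{\sigma'} = \sigma(\mathbf{D}', \mathbf{c})$ is the optimal threshold for that distribution. The third term is bounded by $O(n\epsilon)$ by applying Theorem~\ref{thm:kolomogrov_1} with the roles of $\mathbf{D}$ and $\mathbf{D}'$ exchanged; this is legitimate because the Kolmogorov distance is symmetric, so the hypothesis $d_K(D'_i, D_i) \le \epsilon$ still holds, and the ``$\sigma$'' of the swapped statement is exactly $\sigma(\mathbf{D}', \mathbf{c}) = \mathbf{\sigma'}$.

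The step I would expect to be most delicate conceptually is the third term, because one has to notice that Theorem~\ref{thm:kolomogrov_1} is stated only for the reservation vector matching the first distribution, and thus needs to be reused in a swapped form rather than applied to an arbitrary $\mathbf{\tau}$. Once that symmetry is observed, the whole argument collapses to a one-line combination: two $O(n\epsilon)$ bounds plus one non-positive optimality gap, giving $W_{\mathbf{\sigma}}(\mathbf{D}) - W_{\mathbf{\sigma'}}(\mathbf{D}) \le O(n\epsilon)$, as desired.
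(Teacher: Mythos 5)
Your proof is correct and is essentially the paper's argument: the paper also combines the optimality of $\sigma'$ for $\mathbf{D}'$ with two applications of Theorem~\ref{thm:kolomogrov_1} (one in the swapped form, since the left chain term $W_{\sigma'}(\mathbf{D}) \ge W_{\sigma'}(\mathbf{D}') - O(n\epsilon)$ needs the theorem applied with $\mathbf{D}$ and $\mathbf{D}'$ exchanged), merely presented as a chain of inequalities rather than a telescoping sum. Your explicit remark about the role swap is a point the paper leaves implicit, but the underlying argument is identical.
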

\begin{proof}
    The first inequality follows from the optimality of $\sigma$ for $\mathbf{D}$ (see Lemma~\ref{lm:orig}).
    We therefore focus on the second inequality. 
    We start by observing that, since $\sigma'_i$ denotes the reservations prices for the distribution $D'_i$,
    it leads to the optimal expected utility when the values in the boxes are sampled from $D'_i$. Formally, by Lemma~\ref{lm:orig},
    \begin{align}
        W_{\mathbf{\sigma'}}(\mathbf{D'})
        \ge 
        W_{\mathbf{\sigma}}(\mathbf{D'})
        \label{eq:unif_converge_1}
    \end{align}
    It follows that
    \begin{align*}
         W_{\mathbf{\sigma'}}(\mathbf{D})
         &\ge 
         W_{\mathbf{\sigma'}}(\mathbf{D'})
         - O(n\epsilon)
         &\EqComment{Theorem~\ref{thm:kolomogrov_1}}
         \\&\ge 
         W_{\mathbf{\sigma}}(\mathbf{D'})
         - O(n\epsilon)
         &\EqComment{Equation~\eqref{eq:unif_converge_1}}
         \\&\ge 
         W_{\mathbf{\sigma}}(\mathbf{D})
         - O(n\epsilon)
         &\EqComment{Theorem~\ref{thm:kolomogrov_1}}.
    \end{align*}
\end{proof}

We proceed to prove Theorem~\ref{thm:kolomogrov_1}.
Assume without loss of generality that the boxes are ordered in decreasing value of $\sigma_i$; i.e., 
$\sigma_1 \ge \sigma_2 \ge \dots, \ge \sigma_n$.

\begin{lemma}\label{lm:g_increas_contin}
    Fix $i$ and the values $x_1, \dots, x_{i-1}$.
    Define $g(x) := W_{\mathbf{\sigma}}(x_1, \dots, x_{i-1}, x, D_{i+1}, \dots, D_n)$.    
    The function $g(.)$ is increasing and $1$-lipschitz (and therefore continuous).
\end{lemma}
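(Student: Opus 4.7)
The plan is to express $g(x)$ as a composition of $x \mapsto \max(M, x)$, where $M := \max(x_1, \dots, x_{i-1})$, with a continuation-value function that summarizes the algorithm's future expected utility, and then show that continuation function is increasing and $1$-Lipschitz by backward induction from box $n$ down to box $i$.

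First I would split on whether the algorithm even opens box $i$. Since boxes are ordered with $\sigma_1 \ge \dots \ge \sigma_n$, if $M \ge \sigma_i$ then the algorithm has already stopped at some step $j < i$, and neither $x$ nor $D_{i+1}, \dots, D_n$ affects the outcome; so $g$ is constant and the claim is immediate. In the complementary case $M < \sigma_i$, the algorithm opens box $i$ regardless of $x$, paying $c_1 + \dots + c_i$ and landing in state $m = \max(M, x)$.

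In the latter case I define, for $k \ge i$, the continuation value $H_k(m)$ as the expected utility (over $Y_{k+1} \sim D_{k+1}, \dots, Y_n \sim D_n$) contributed from step $k+1$ onward given current maximum $m$, not counting costs already paid. The algorithm's recursion yields
\[
H_n(m) = m,
\qquad
H_k(m) = \begin{cases} m & \text{if } m \ge \sigma_{k+1}, \\ -c_{k+1} + \Exu{Y \sim D_{k+1}}{H_{k+1}(\max(m, Y))} & \text{if } m < \sigma_{k+1}, \end{cases}
\]
and $g(x) = -\sum_{l=1}^i c_l + H_i(\max(M, x))$. Since $x \mapsto \max(M, x)$ is increasing and $1$-Lipschitz, it suffices to show $H_i$ is.

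I would prove by backward induction on $k$ that $H_k$ is increasing and $1$-Lipschitz on $[0,1]$. The base case $H_n(m) = m$ is immediate. For the inductive step, the stop branch is the identity, and on the continue branch the inductive hypothesis together with $\abs{\Ex{f(m_2) - f(m_1)}} \le \Ex{\abs{f(m_2) - f(m_1)}}$ and the fact that $\max(m, Y)$ is $1$-Lipschitz and increasing in $m$ yield monotonicity and $1$-Lipschitzness inside each branch. The delicate step --- and the main obstacle --- is gluing the two branches into one globally $1$-Lipschitz function; this requires continuity at $m = \sigma_{k+1}$. Here I would invoke the definition of the reservation price: since $\sigma_{k+1} \ge \sigma_{k+2}$, the inductive hypothesis gives $H_{k+1}(m') = m'$ for every $m' \ge \sigma_{k+1}$, so the continue branch at $m = \sigma_{k+1}$ evaluates to $-c_{k+1} + \Ex{\max(\sigma_{k+1}, Y_{k+1})} = -c_{k+1} + \sigma_{k+1} + \Ex{\sbr{Y_{k+1} - \sigma_{k+1}}_+} = \sigma_{k+1}$, exactly matching the stop branch. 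With continuity in hand, a one-line triangle-inequality argument handles $m_1 \le \sigma_{k+1} \le m_2$ by splitting the gap at $\sigma_{k+1}$ and applying the per-branch Lipschitz bounds, closing the induction.
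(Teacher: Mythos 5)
Your proof is correct, but it takes a genuinely different route from the paper's. The paper proves the lemma by first establishing the closed-form identity
\[
g(x) = \Exu{X_{i+1},\dots,X_n}{\max\bigl\{\max(x_1,\dots,x_{i-1},x),\ \max_{j>i}\kappa_j\bigr\}} - c_{\le i},
\]
which it derives via a Weitzman-style amortization: replace the up-front cost $c_j$ by the deferred charge $[X_j-\sigma_j]_+$, then observe that this charge is only paid on the accepted box, so the net utility collapses to the displayed max. Once that formula is in hand, monotonicity and $1$-Lipschitzness in $x$ are immediate by inspection. You instead work directly with the Bellman recursion for the value function $H_k$, proving monotonicity and $1$-Lipschitzness by backward induction and using the defining equation of the reservation price, $\sigma_{k+1} = -c_{k+1} + \Ex{\max(\sigma_{k+1},Y_{k+1})}$, precisely to show the stop and continue branches of $H_k$ agree at $m=\sigma_{k+1}$, so that the piecewise-defined function glues into a single Lipschitz function. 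Both proofs use the same case split at the start (box $i$ opened or not, governed by whether $\max(x_1,\dots,x_{i-1})<\sigma_i$ under the decreasing-$\sigma$ ordering), and both ultimately rest on the identity $\Ex{[Y-\sigma]_+}=c$, but they deploy it at different places: the paper uses it globally for the amortization, you use it locally for boundary continuity. The paper's argument yields a clean closed form that is also the engine behind the optimality of $\alg_\sigma$ (Lemma~\ref{lm:orig}), whereas your DP argument is more elementary and self-contained, not requiring the cost-shifting trick; the trade-off is that you prove the regularity of $g$ without obtaining a usable formula for it, which is all that is needed here but gives less insight into where the $\kappa_j$ structure comes from.
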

\begin{proof}
    The value of $x_{1}, \dots, x_{i-1}$ determine
    whether or not box $i$ is opened in the algorithm.
    If the box is never opened, then
    $g(x)$ is always a constant value and the claim is proved.
    We therefore assume that box $i$ is opened for the rest of the proof.
    
    Define $B(x) = \max \cbr{x_1, \dots, x_{i-1}, x}$. 
    Let $c_{\le i} := \sum_{j \le i} c_{j}$ denote the cost of opening boxes $1$ to $i$.
    We claim that
    \begin{align}
        W_{\mathbf{\sigma}}(x_1, \dots, x_{i-1}, x, D_{i+1}, \dots, D_n) = 
        \Exu{X_{i+1} \sim D_{i+1}, \dots, X_n \sim D_n}{\max \cbr{B(x), \max_{i+1 \le j\le n} \kappa_j}} - c_{\le i}.
        \label{eq:W_decompose_g}
    \end{align}
    Observe that this would finish the proof since
    the above function is clearly $1$-lipschitz and increasing in $x$.
    
    To prove this, consider the following variant of the problem.
    For all $j \ge i + 1$, instead of charging the algorithm $c_j$ before it opens the box,
    we charge it the excess value $[X_j - \sigma_j]_{+}$ after it opens the box.
    Note that the choice of which boxes to open does not change since we still use $\alg_{\sigma}$; as before we keep opening boxes in descending order of the reservation price and if the next box's reservation price exceeds
    the current maximum value,
    we accept the current maximum value.

    We claim that the
    expected utility of the algorithm
    would not change.
    This holds because, for each $j > i$,
    we have $\Ex{[X_j - \sigma_j]+} = c_j$ by definition of $\sigma$, which means that the expected cost of opening box $j$ does not change. 
    Since expectation is linear, the overall cost of opening boxes does not change in expectation. Since the value accepted by the algorithm and the boxes opened do no change, the overall expected utility stays the same. 
    
    We further alter the problem by assuming that the algorithm only pays the cost $[X_j - \sigma_j]_{+}$ for a box $j > i$ if it accepts box $j$. We claim that this also does not change the expected utility of the algorithm.
    Let $i'$ be the first box such that
    $X_{i'} > \sigma_{i'}$.
    Since $\sigma_j$ is decreasing in $j$, we have
    $X_{i'}> \sigma_{j}$ for all $j > i'$. Therefore, box $i$ is accepted, which means the algorithm is not charged $[X_j - \sigma_j]_{+}$ for $j > i$, finishing the proof.

    Let $\tilde{W}(x_1, x_{i-1}, x, D_{i+1}, \dots, D_n)$ denote
    the utility in the new model. By the above discussion, we have
    \begin{align*}
        \tilde{W}_{\sigma}(x_1, x_{i-1}, x, D_{i+1}, \dots, D_n)
        = 
        {W}_{\sigma}(x_1, x_{i-1}, x, D_{i+1}, \dots, D_n)
        .
    \end{align*}
    Next, we analyze $\tilde{W}(x_1, x_{i-1}, x, D_{i+1}, \dots, D_n)$.
    We will show that, for any value of $X_{i+1}, \dots, X_{n}$,
    the algorithm's utility is exactly
    \begin{align*}
        \max\cbr{B(x), \max_{i+1 \le j \le n} \kappa_j}- c_{\le i}
        .
    \end{align*}
    Taking expectations over $X_{i+1}, \dots, X_n$, this finishes the proof.
    To simplify notation, we define $X_j=x_j$ for $j < i$ and
    $X_i = x$.

    We first show that
    the algorithm's utility is at most the mentioned value.
    To prove this, observe that
    the algorithm already pays $c_{\le i}$ for opening boxes $1$ to $i$.
    If it accepts a box $j$ for $j \le i$, then it receives at most
    $B(x)$, and the claim is proved. Otherwise, it must accept a box $j \ge i + 1$, which means it receives
    $X_{j}$ but also pays $[X_j - \sigma_j]_{+}$. Therefore, it receives a net value of $\kappa_j \le \max_{\ell} \kappa_\ell$.

    We next show that
    the algorithm's utility is at least
    the mentioned value.
    We first observe that since box $i$ is opened, the algorithm always accepts some box $j \in [n]$. 
    The algorithm has already paid the cost $c_{\le i}$ so we will focus on analyzing the value of the box it accepts and the cost it may potentially pay for it.
    
    Let $r \ge i$ denote the last box that was opened by the algorithm.
    Since the algorithm chooses the opened box with the largest value,
    we have
    \begin{align}
        X_j \ge X_{\ell} \text{ for } \ell \in [r].
        \label{eq:jan7_1}
    \end{align}
    Additionally, since boxes $\ell > r$ were not opened (assuming such boxes exist), we have
    \begin{align}
        X_j \ge \sigma_\ell\text{ for } \ell > r
        .
        \label{eq:jan7_3}
    \end{align}

    If $j \le i$, then
    the algorithm receives
    $X_j$. This is at least $B(x)$ because of Equation~\eqref{eq:jan7_1} and the fact that $i \le r$. It is also at least $\kappa_\ell$ for any $\ell > i$;
    it is at least 
    $X_\ell$ if $\ell \le r$ because of Equation~\eqref{eq:jan7_1} and it is at least $\sigma_\ell$ if $\ell > r$ because of Equation~\eqref{eq:jan7_3}. Therefore, the claim is proved in this case.
    
    If $j \ge i + 1$, then the algorithm receives $X_j$ but also pays the extra cost 
    $[X_j - \sigma_j]_{+}$, which means it receives the net value
    $\kappa_j$.
    We therefore need to show
    $\kappa_j \ge X_\ell$
    for $\ell \le i$
    and $\kappa_j \ge \kappa_\ell$ for $\ell \ge i + 1$.
    The
    fact that box $j$ was opened means that
    \begin{align}
        \sigma_j \ge \max_{\ell < j} X_\ell
        .
        \label{eq:jan7_2}
    \end{align}
    If $\ell \le i$
    we have 
    $X_j \ge X_\ell$ because of Equation~\eqref{eq:jan7_1} and the fact that $\ell \le i \le r$.
    We also have $\sigma_j \ge X_\ell$ because of 
    Equation~\eqref{eq:jan7_2}
    and the assumption $j > i \ge \ell$.
    Therefore
    \begin{align*}
        \kappa_j
        = \min\{
        \sigma_j, X_j
        \}
        \ge X_\ell
        .
    \end{align*}
    
    If $i + 1 \le \ell \le j$, then 
    we have
    $X_j \ge X_\ell \ge \kappa_\ell$ by
    Equation~\eqref{eq:jan7_1}
    and 
    $\sigma_j \ge X_\ell \ge \kappa_\ell$ because 
    of Equation~\eqref{eq:jan7_2}. 
    If $j < \ell \le r$,
    we have
    $X_j \ge X_\ell \ge \kappa_\ell$ by
    Equation~\eqref{eq:jan7_1}
    and 
    $\sigma_j \ge \sigma_\ell \ge \kappa_\ell$
    because 
    the algorithm inspects in decreasing order of $\sigma$.
    Finally,
    if 
    $\ell > r$,
    then we have
    $X_j \ge \sigma_\ell \ge \kappa_\ell$ because of Equation~\eqref{eq:jan7_3}
    and $\sigma_j \ge \sigma_\ell \ge \kappa_\ell$ 
    because 
    the algorithm inspects in decreasing order of $\sigma$.

\end{proof}

\begin{lemma}\label{lm:jan14_2}
    Fix $i$ and $x_1, \dots, x_{i-1}$.
    \begin{align*}
        \abs{
        W_{\mathbf{\sigma}}(x_1, \dots, x_{i-1}, D'_i, D_{i+1}, \dots, D_n)
        - 
        W_{\mathbf{\sigma}}(x_1, \dots, x_{i-1}, D_i, D_{i+1}, \dots, D_n)
        }
        \le O(\epsilon)
    \end{align*}
\end{lemma}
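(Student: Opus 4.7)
The plan is to reduce the statement to a one-dimensional question about expected values of a Lipschitz function under two distributions close in Kolmogorov distance. Writing $g(x) := W_{\sigma}(x_1,\dots,x_{i-1},x,D_{i+1},\dots,D_n)$, the lemma's left-hand side equals $\abs{\Ex{g(X_i)} - \Ex{g(X'_i)}}$ with $X_i \sim D_i$ and $X'_i \sim D'_i$. Lemma~\ref{lm:g_increas_contin} tells us that $g$ is increasing and $1$-Lipschitz on $[0,1]$, which is exactly the regularity required for Kolmogorov closeness to translate into closeness of expectations.

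The bridge will be the integration-by-parts identity $\Ex{g(X)} = g(1) - \int_0^1 F_X(t)\, g'(t)\, dt$, valid for an absolutely continuous $g$ and a random variable $X$ supported on $[0,1]$. Applying it to both $D_i$ and $D'_i$ and subtracting gives
\begin{align*}
    \Ex{g(X_i)} - \Ex{g(X'_i)}
    \;=\;
    \int_0^1 \rbr{F_{D'_i}(t) - F_{D_i}(t)}\, g'(t)\, dt.
\end{align*}
I would then bound this by $\epsilon \int_0^1 \abs{g'(t)}\, dt$ using the Kolmogorov bound $\sup_t \abs{F_{D_i}(t) - F_{D'_i}(t)} \le \epsilon$, and finish by noting that since $g$ is monotone and $1$-Lipschitz on $[0,1]$, $\int_0^1 \abs{g'(t)}\, dt = g(1) - g(0) \le 1$. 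This delivers the desired $O(\epsilon)$ bound (in fact, $\epsilon$).

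The main obstacle is really already absorbed into Lemma~\ref{lm:g_increas_contin}; once Lipschitz monotonicity of the ``fix all other coordinates'' function is in hand, the remainder is a short calculus manipulation. The only subtlety would be atoms in $D_i$ or $D'_i$, but the paper's standing continuity assumption on the distributions sidesteps this by making the boundary terms in the integration by parts vanish cleanly.
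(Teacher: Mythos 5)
Your proof is correct and takes essentially the same approach as the paper: both reduce to Lemma~\ref{lm:g_increas_contin} (monotonicity and $1$-Lipschitzness of $g$) and then convert the Kolmogorov gap between $D_i$ and $D'_i$ into an expectation gap for $g$. The only difference is presentational: the paper integrates the tail-probability gap $\abs{\Pr{g(X)\ge t}-\Pr{g(X')\ge t}}$ over $t$, using the monotonicity and continuity of $g$ to rewrite each tail event as $\cbr{X\ge x(t)}$, whereas you integrate by parts to land directly on $\int_0^1\rbr{F_{D'_i}(t)-F_{D_i}(t)}g'(t)\,dt$; both routes give the identical bound $\epsilon\rbr{g(1)-g(0)}\le\epsilon$.
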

\begin{proof}
    Sample the random variables $X_i$ and $X'_i$ from $D_i$ and $D'_i$ respectively.
    Define $g(x)$ as
    \begin{align*}
        W_{\mathbf{\sigma}}(x_1, \dots, X_{i-1}, x, D_{i+1}, \dots, D_n).
    \end{align*}
    We need to show that
    $\abs{\Ex{g(X) - g(X')}} \le O(\epsilon)$.

    Given Lemma~\ref{lm:g_increas_contin},
    for any value $t$, there exists a value $x(t) \in  [-\infty, +\infty]$ such that
    \begin{align*}
        \Pr{g(X) \ge t}
        = \Pr{ 
            X \ge x(t)
        }.
    \end{align*}
    Note we are including the values $+\infty$ and $-\infty$ to account for the possibility that the event always holds or it never holds.
    Specifically, according to Lemma~\ref{lm:g_increas_contin} the function
$W_{\mathbf{\sigma}}(x_1, \dots, X_{i-1}, x, D_{i+1}, \dots, D_n)$ increases in $x$, which means that the set of all $x$ for which it attains a value greater than $t$ has one of the following forms.
    \begin{itemize}
        \item It is the empty. In this case $x(t)=+\infty$.
        \item 
        It is the set $(-\infty, +\infty)$. In this case $x(t) = -\infty$.
        \item It is an interval of the form $[x(t), \infty)$. 
        \item It is an interval of the form $(x(t), \infty)$. This cannot happen however because the function is also continous, which means if it is at least $t$ for all values greater than $x(t)$, it is at least $t$ for $x(t)$ as well.
    \end{itemize}
    
    Similarly, 
    \begin{align*}
        \Pr{g(X') \ge t}
        = \Pr{ 
            X' \ge x(t)
        }.
    \end{align*}
    Therefore,
    $\Pr{g(X) \ge t}$.
    By assumption on $D_i, D'_i$, we have
    $\abs{\Pr{X \ge x(t)} - \Pr{X' \ge x(t)}} \le \epsilon$, which implies
    \begin{align}
        \abs{\Pr{g(X) \ge t} - \Pr{g(X') \ge t}} \le \epsilon
        \label{eq:jan14_1}
    \end{align}
    for all $t$.
    
    It follows that
    \begin{align*}
        \Ex{\abs{ g(X) - g(X') }} 
        &= 
        \int_{0}^{\infty} 
        \abs{\Pr{g(X) \ge t}
        - 
        \Pr{g(X') \ge t} }
        \; dt
    \end{align*}
    Note however that, since $D, D'$ are always in the set $[0, 1]$, we have
    $\Pr{g(X) > g(1)} = \Pr{g(X') > g(1)} = 0$ 
    and
    $\Pr{g(X) \ge g(0)} = \Pr{g(X') \ge g(0)} = 1$.
    Therefore, we can rewrite the above as
    \begin{align*}
        \int_{g(0)}^{g(1)} 
        \abs{\Pr{g(X) \ge t}
        - 
        \Pr{g(X') \ge t} }
        \; dt
        \le \epsilon (g(1) - g(0))
        \le \epsilon
    \end{align*}
    where the first inequality follows from Equation~\eqref{eq:jan14_1}
    and the second inequality follows from the fact that $g$ is $1$-lipschitz (see Lemma~\ref{lm:g_increas_contin}).
    
\end{proof}
Sampling $X'_1\sim D'_1, \dots, X'_{i} \sim D'_{i}$,
\begin{align*}
    &\abs{
        W_{\mathbf{\sigma}}(D'_1, \dots D'_{i-1}, D'_i, D_{i+1}, \dots, D_n)
        - 
        W_{\mathbf{\sigma}}(D'_1, \dots D'_{i-1}, D_i, D_{i+1}, \dots, D_n)
        }
    \\&=
    \abs{
        \Ex{
        W_{\mathbf{\sigma}}(X'_1, \dots X'_{i-1}, X'_i, D_{i+1}, \dots, D_n)
        - 
        W_{\mathbf{\sigma}}(X'_1, \dots X'_{i-1}, X_i, D_{i+1}, \dots, D_n)
        }
        }
    \\&\le 
    \Ex{
    \abs{
    W_{\mathbf{\sigma}}(X'_1, \dots X'_{i-1}, X'_i, D_{i+1}, \dots, D_n)
        - 
        W_{\mathbf{\sigma}}(X'_1, \dots X'_{i-1}, X_i, D_{i+1}, \dots, D_n)
        }
    }
    \\&\le 
    \Ex{O(\epsilon)},
\end{align*}
where the first inequality follows from Jensen and the second inequality follows from Lemma~\ref{lm:jan14_2}.

Summing over all $i$ we obtain Theorem~\ref{thm:kolomogrov_1}.

\bibliographystyle{abbrv}
\bibliography{ref}

\begin{thebibliography}{1}

\bibitem{beyhaghi2024recent}
H.~Beyhaghi and L.~Cai.
\newblock Recent developments in pandora's box problem: Variants and
  applications.
\newblock {\em ACM SIGecom Exchanges}, 21(1):20--34, 2024.

\bibitem{chawla2020pandora}
S.~Chawla, E.~Gergatsouli, Y.~Teng, C.~Tzamos, and R.~Zhang.
\newblock Pandora's box with correlations: Learning and approximation.
\newblock In {\em 2020 IEEE 61st Annual Symposium on Foundations of Computer
  Science (FOCS)}, pages 1214--1225. IEEE, 2020.

\bibitem{dutting2019posted}
P.~D{\"u}tting and T.~Kesselheim.
\newblock Posted pricing and prophet inequalities with inaccurate priors.
\newblock In {\em Proceedings of the 2019 ACM Conference on Economics and
  Computation}, pages 111--129, 2019.

\bibitem{esfandiari2019online}
H.~Esfandiari, M.~HajiAghayi, B.~Lucier, and M.~Mitzenmacher.
\newblock Online pandora’s boxes and bandits.
\newblock In {\em Proceedings of the AAAI Conference on Artificial
  Intelligence}, volume~33, pages 1885--1892, 2019.

\bibitem{kleinberg2016descending}
R.~Kleinberg, B.~Waggoner, and E.~G. Weyl.
\newblock Descending price optimally coordinates search.
\newblock {\em arXiv preprint arXiv:1603.07682}, 2016.

\bibitem{weitzman1978optimal}
M.~Weitzman.
\newblock {\em Optimal search for the best alternative}, volume~78.
\newblock Department of Energy, 1978.

\end{thebibliography}
\end{document}